\newcommand{\noun}[1]{\textsc{#1}}
\providecommand{\tabularnewline}{\\}
\spnewtheorem{observation}{Observation}{\bfseries}{\itshape}
\spnewtheorem{rrule}{Rule}{\bfseries}{\rmfamily}
\spnewtheorem*{rrule*}{Rule}{\bfseries}{\rmfamily}
\spnewtheorem{fact}{Fact}{\bfseries}{\rmfamily}
\spnewtheorem{numclaim}{Claim}{\itshape}{\rmfamily}
\spnewtheorem*{lemma*}{Lemma}{\bfseries}{\itshape}
\newcommand{\bem}[1]{{\bf {\em #1}}}
\begin{document}

\title{Solving Dominating Set in Larger Classes of Graphs: FPT Algorithms
and Polynomial Kernels}

\author{Geevarghese Philip \and Venkatesh Raman \and Somnath Sikdar}

\institute{The Institute of Mathematical Sciences, Chennai, India.\\
\email{\{gphilip|vraman|somnath\}@imsc.res.in}}

\maketitle
\begin{abstract}
We show that the \noun{$k$-Dominating Set} problem is fixed parameter
tractable (FPT) and has a polynomial kernel for any class of graphs
that exclude $K_{i,j}$ as a subgraph, for any fixed $i,j\ge1$. This
strictly includes every class of graphs for which this problem has
been previously shown to have FPT algorithms and/or polynomial kernels.
In particular, our result implies that the problem restricted to bounded-degenerate
graphs has a polynomial kernel, solving an open problem posed by Alon
and Gutner in~\cite{AlonGutner2008}.
\end{abstract}

\section{Introduction}

The \noun{$k$-Dominating Set} problem asks, for a graph $G=(V,E)$
and a positive integer $k$ given as inputs, whether there is a vertex-subset
$S\subseteq V$ of size at most $k$ such that every vertex in $V\setminus S$
is adjacent to some vertex in $S$. Such a vertex-subset is called
a \emph{dominating set} of $G$. This problem is known to be NP-hard
even in very restricted graph classes, such as the class of planar
graphs with maximum degree~$3$~\cite{GareyJohnson1979}. In the
world of parameterized complexity, this is one of the most important
hard problems: the problem parameterized by $k$ is the canonical
$W\left[2\right]$-hard problem~\cite{DowneyFellows1999}. The problem
remains $W\left[2\right]$-hard even in many restricted classes of
graphs --- for example, it is $W\left[2\right]$-hard in classes of
graphs with bounded average degree~\cite{GolovachVillanger2008}.
This latter fact implies that it is unlikely that the problem has
a fixed-parameter-tractable (FPT) algorithm on graphs with a bounded
average degree, that is, an algorithm that runs in time $f(k)\cdot n^{c}$
for \emph{some} computable function $f(k)$ independent of the input
size $n$, and some constant $c$ independent of $k$. %
\footnote{To know more about the notions of FPT and $W$-hardness and to see
why it is considered unlikely that a $W\left[2\right]$-hard problem
will have an FPT algorithm, see~\cite{DowneyFellows1999}.%
}

The problem has an FPT algorithm on certain restricted families of
graphs, such as planar graphs~\cite{FominThilikos2006}, graphs of
bounded genus~\cite{EllisFanFellows2004}, $K_{h}$-topological-minor-free
graphs, and graphs of bounded degeneracy~\cite{AlonGutner2007};
these last being, to the best of our knowledge, the most general graph
class previously known to have an FPT algorithm for this problem.
In the current paper, we show that the problem has an FPT algorithm
in a class of graphs that encompasses, and is strictly larger than,
all such previously known classes --- namely, the class of $K_{i,j}$-free
graphs.

Closely related to the notion of an FPT algorithm is the concept of
a \emph{kernel} for a parameterized problem. For the \noun{$k$-Dominating
Set} problem parameterized by $k$, a kernelization algorithm is a
polynomial-time algorithm that takes $(G,k)$ as input and outputs
a graph $G'$ and a nonnegative integer $k'$ such that the size of
$G'$ is bounded by some function $g(k)$ of $k$ alone, $k'\le h(k)$
for some function $h(k)$, and $G$ has a dominating set of size at
most $k$ if and only if $G'$ has a dominating set of size at most
$k'$. The resulting instance $G'$ is called a kernel for the problem.
A parameterized problem has a kernelization algorithm if and only
if it has an FPT algorithm~\cite{DowneyFellows1999}, and so it is
unlikely that the \noun{$k$-Dominating Set} problem on general graphs
or on graphs having a bounded average degree has a kernelization algorithm.
For the same reason, this problem has a kernelization algorithm when
restricted to those graph classes for which it has an FPT algorithm.
But the size of the kernel obtained from such an algorithm could be
exponential in $k$, and finding if the kernel size can be made smaller
--- in particular, whether it can be made polynomial in $k$ --- is
an important problem in parameterized complexity.

Proving polynomial bounds on the size of the kernel for different
parameterized problems has been a significant practical aspect in
the study of the parameterized complexity of NP-hard problems, and
many positive results are known. See~\cite{GuoNiedermeier2007} for
a survey of kernelization results. 

For the \noun{$k$-Dominating Set} problem, the first polynomial kernel
result was obtained by Alber et al.~\cite{AlberFellowsNiedermeier2004}
in 2004: they showed that the problem has a \emph{linear} kernel of
at most $335k$ vertices in planar graphs. This bound for planar graphs
was later improved by Chen et al.~\cite{ChenFernauKanjXia2007} to
$67k$. Fomin and Thilikos~\cite{FominThilikos2004} showed in 2004
that the same reduction rules as used by Alber et al.\ give a linear
kernel (linear in $k+g$) for the problem in graphs of genus $g$.
The next advances in kernelizing this problem were made by Alon and
Gutner in 2008~\cite{AlonGutner2008}. They showed that the problem
has a linear kernel in $K_{3,h}$-topological-minor-free graph classes
(which include, for example, planar graphs), and a polynomial kernel
(where the exponent depends on $h$) for $K_{h}$-topological-minor-free
graph classes, these last being the most general class of graphs for
which the problem has been previously shown to have a polynomial kernel.
In the meantime, the same authors had shown in 2007 that the problem
is FPT in (the strictly larger class of) graphs of bounded degeneracy~\cite{AlonGutner2007},
but had left open the question whether the problem has a polynomial
kernel in such graph classes. In this paper, we answer this question
in the affirmative, and show, in fact, that a strictly larger classes
of graphs --- the $K_{i,j}$-free graph classes --- have polynomial
kernels for this problem. 

See Table~\ref{tab:KnownResults} for a summary of some FPT and kernelization
results for the \noun{$k$-Dominating Set} problem on various classes
of graphs.%
\begin{table}
\begin{centering}
\begin{tabular}{|c|@{\hspace{2mm}}@{\hspace{2mm}}l|@{\hspace{2mm}}@{\hspace{2mm}}l@{\hspace{2mm}}|}\hline \hline 
\emph{Graph Class}  & \emph{FPT Algorithm Running Time} & \emph{Kernel Size}\tabularnewline \hline 
 &  & \tabularnewline
Planar  & $O(k^{4}+2^{15.13\sqrt{k}}k+n^{3})$~\cite{FominThilikos2006} & $O(k)$~\cite{AlberFellowsNiedermeier2004,ChenFernauKanjXia2007} \tabularnewline
 &  & \tabularnewline
Genus-$g$  & $O((24g^{2}+24g+1)^{k}n^{2})$~\cite{EllisFanFellows2004} & $O(k+g)$~\cite{FominThilikos2004} \tabularnewline
 &  & \tabularnewline
$K_{h}$-minor-free & $2^{O(\sqrt{k})}n^{c}$~\cite{DemaineFominHajiaghayiThilikos2005:2},$O((\log h))^{hk/2}\cdot n$~\cite{AlonGutner2007} & $O(k^{c})$~\cite{AlonGutner2008} \tabularnewline
 &  & \tabularnewline
$K_{h}$-topological-minor-free  & $(O(h))^{hk}\cdot n$~\cite{AlonGutner2007} & $O(k^{c})$~\cite{AlonGutner2008}\tabularnewline
 &  & \tabularnewline
$d$-degenerate  & $k^{O(dk)}n$~\cite{AlonGutner2007}  & $k^{O(dk)}$~\cite{AlonGutner2007}, $O((d+2)^{2(d+2)}\cdot k^{2(d+1)^{2}})^{\dagger}$\tabularnewline
 &  & \tabularnewline
$K_{i,j}$-free & $O(n^{i+O(1)})^{\dagger}$ & $O((j+1)^{2(i+1)}k^{2i^{2}})^{\dagger}$\tabularnewline
 &  & \tabularnewline
\hline \hline
\end{tabular}
\par\end{centering}

\caption{\label{tab:KnownResults}Some FPT and kernelization results for $k$\noun{-Dominating
Set.} Results proved in this paper are marked with a \noun{$\dagger$.}}

\end{table}

\paragraph{Our Results.}

We show that for any fixed $i,j\ge1$, the \noun{$k$-Dominating Set}
problem has a polynomial kernel on graphs that do not have~$K_{i,j}$
(a complete bipartite graph with the two parts having~$i$ and~$j$
vertices) as a subgraph. For input graph~$G$ and parameter~$k$,
the size of the kernel is bounded by~$k^{c}$ where~$c$ is a constant
that depends only on~$i$ and~$j$. A graph~$G$ is said to be
$d$\emph{-degenerate} if every subgraph of~$G$ has a vertex of
degree at most~$d$. Since a $d$-degenerate graph does not have~$K_{d+1,d+1}$
as a subgraph, it follows that the \noun{$k$-Dominating Set} problem
has a polynomial kernel on graphs of bounded degeneracy. This settles
a question posed by Alon and Gutner in~\cite{AlonGutner2008}. We
also provide a slightly simpler and a smaller kernel for the version
where we want the \noun{$k$-Dominating Set} to be independent as
well.

Note that except for $d$-degenerate graphs, all the other graph classes
in \mbox{Table~\ref{tab:KnownResults}} are minor-closed. This
seems to be indicative of the state of the art --- the only other
previous FPT or kernelization result for the \noun{$k$-Dominating
Set} problem on a non-minor-closed class of graphs that we know of
is the $O(k^{3})$ kernel and the resulting FPT algorithm for graphs
that exclude triangles and $4$-cycles~\cite{RamanSaurabh2008}.
In fact, this result can be modified to obtain similar bounds on graphs
with just no $4$-cycles (allowing triangles). Since a $4$-cycle
is just $K_{2,2}$, this result follows from the main result of this
paper by setting $i=j=2$.

Since a $K_{h}$-topological-minor-free graph has bounded degeneracy~\cite[Proposition 3.1]{AlonGutner2008}
(for a constant $h$), the class of $K_{i,j}$-free graphs is more
general than the class of $K_{h}$-topological-minor-free graphs.
Thus we extend the class of graphs for which the \noun{$k$-Dominating
Set} problem is known to have (1)~FPT algorithms and (2)~polynomial
kernels, to the class of $K_{i,j}$-free graphs.

\paragraph{Organization of the rest of the paper.}

In Section~\ref{sec:kij-free kernel}, we develop our main algorithm
that runs in $O(n^{i+O(1)})$ time and constructs a kernel of size
$O((j+1)^{2\left(i+1\right)}k^{2i^{2}})$ for $k$\noun{-Dominating
Set} on $K_{i,j}$-free graphs, for fixed $j\ge i\ge2$. As a corollary
we obtain, in Section~\ref{sec:kernel_d_degenerate}, a polynomial
kernel for $d$-degenerate graphs, with running time $O(n^{O(d)})$
and kernel size $O((d+2)^{2(d+2)}k^{2(d+1)^{2}})$. In Section~\ref{sub:faster_d_degenerate}
we describe an improvement to the above algorithm that applies to
$d$-degenerate input graphs which yields a kernel of the same size
as above and runs in time $O(2^{d}dn^{2})$. In Section~\ref{sec:indep_dom_set_ij}
we describe a modification of the algorithm in Section~\ref{sec:kij-free kernel}
that constructs a polynomial kernel for the $k$-\noun{Independent
Dominating Set} problem on $K_{i,j}$-free graphs. This kernel has
$O(jk^{i})$ vertices, and so implies a kernel of size $O((d+1)k^{d+1})$
for this problem on $d$-degenerate graphs. In Section~\ref{sec:Conclusion}
we state our conclusions and list some open problems.

\paragraph{Notation.}

All the graphs in this paper are finite, undirected and simple. In
general we follow the graph terminology from~\cite{Diestel2000}.
We let $V(G)$ and $E(G)$ denote, respectively, the vertex and edge
sets of a graph $G$. The \emph{open-neighborhood} of a vertex $v$
in a graph $G$, denoted $N(v)$, is the set of all vertices that
are adjacent to $v$ in $G$. A \emph{$k$-dominating set} of graph
$G$ is a vertex-subset $S$ of size at most $k$ such that for each
$u\in V(G)\setminus S$ there exists $v\in S$ such that $\{u,v\}\in E(G)$.
Given a graph $G$ and $A,B\subseteq V(G)$, we say that $A$ dominates
$B$ if every vertex in $B\setminus A$ is adjacent in $G$ to some
vertex in $A$.

\section{\label{sec:kij-free kernel} A Polynomial Kernel for $K_{i,j}$-free
Graphs}

In this section we consider the parameterized $k$\noun{-Dominating
Set} problem on graphs that do not have $K_{i,j}$ as a subgraph,
for fixed $j\ge i\ge1$. It is easy to see that the problem has a
linear kernel when $i=1,j\ge i$, so we consider the cases $j\ge i\ge2$.
We solve a more general problem, namely the \noun{rwb-Dominating Set}
problem, defined as follows: Given a graph $G$ whose vertex set $V$
is partitioned into $R_{G},W_{G}$, and $B_{G}$ (colored red, white,
and black, respectively) and a non-negative integer parameter $k$,
is there a subset $S\subseteq V$ of size at most $k$ such that $R_{G}\subseteq S$
and $S$ dominates $B_{G}$? We call such an $S$ an \emph{rwb-dominating}
set of $G$, and such a graph an \emph{rwb-graph}.

Intuitively, the vertices colored red are those that will be picked
up by the reduction rules in the $k$-dominating set $D$ that we
are trying to construct. In particular, if there is a $k$-dominating
set in the graph, there will be one that contains all the red vertices.
White vertices are those that have been already dominated. Clearly
all neighbors of red vertices are white, but our reduction rules color
some vertices white even if they have no red neighbors (at that point).
These are vertices that will be dominated by one of some constant
number of vertices identified by the reduction rules. See reduction
rule 2 for more details. Black vertices are those that are yet to
be dominated. It is easy to see that if we start with a general graph
$G$ and color all its vertices black to obtain an rwb-graph $G'$,
then $G$ has a dominating set of size at most $k$ if and only if
$G'$ has an rwb-dominating set of size at most $k$.

We first describe an algorithm that takes as input an rwb-graph $G$
on $n$ vertices and a positive number $k$, and runs in $O(n^{i+O(1)})$
time. The algorithm either finds that $G$ does not have any rwb-dominating
set of size at most $k$, or it constructs an instance $(G',k')$
on $O((j+1)^{i+1}k^{i^{2}})$ vertices such that~$G$ has an rwb-dominating
set of size at most $k$ if and only if $G'$ has an rwb-dominating
set of size at most $k'$.

The algorithm applies a sequence of reduction rules in a specified
order. The input and output of each reduction rule are rwb-graphs. 

\begin{definition}
\label{def:reduced_graph}We say that graph $G$ is \bem{reduced}
with respect to a reduction rule if an application of the rule to
$G$ does not change $G$. 
\end{definition}
Each reduction rule satisfies the following correctness condition
and preserves the invariants stated below:

\begin{definition}
\label{def:rule_correctness_ij}\emph{(Correctness)} A reduction rule
$R$ is said to be \bem{correct} if the following condition holds:
if $(G',k')$ is the instance obtained from $(G,k)$ by one application
of rule $R$ then $G'$ has an rwb-dominating set $D'$ of size $k'$
if and only if $G$ has an rwb-dominating set $D$ of size $k$. %
\footnote{\label{fn:k-no-change}Note, however, that none of our reduction rules
changes the value of $k$, and so $k'=k$ for every one of these rules.%
}
\end{definition}
\label{rem:reduction_rule_properties_ij}\textbf{Invariants}:

\begin{enumerate}
\item \label{rem:no_new_kij}None of the reduction rules introduces a $K_{i,j}$
into a graph.
\item \label{rem:red_nbrs_white_ij}In the rwb-graphs constructed by the
algorithm, red vertices have all white neighbors.
\item \label{rem:rules_no_regress_ij}Let $R$ be any reduction rule, and
let $R'$ be a rule that precedes $R$ in the given order. If $G$
is a graph that is reduced with respect to $R'$ and $G'$ is a graph
obtained by applying $R$ to $G$, then $G'$ is reduced with respect
to $R'$.
\end{enumerate}

\subsection{\label{sub:rules_ij}The reduction rules and the kernelization algorithm}

The kernelization algorithm assumes that the input graph is an rwb-graph.
It applies the following rules exhaustively in the {\em given order}.
Each rule is repeatedly applied till it causes no changes to the graph
and then the next rule is applied.

To make it easier to present the reduction rules and the arguments
of their correctness, we use a couple of notational conventions in
this section. For each rule below, $G$ denotes the graph on which
the rule is applied, and $G'$ the graph that results. Further, $D$
and $D'$ are as in Definition~\ref{def:rule_correctness_ij}: $D$
is an rwb-dominating set of size $k$ of $G$, and $D'$ an rwb-dominating
of $G'$ of size $k'$. $^{\ref{fn:k-no-change}}$

\begin{rrule}\label{rul:one_ij}Color all isolated black vertices
of $G$ red. \end{rrule}

Rule~\ref{rul:one_ij} is correct as the only way to dominate isolated
black vertices is by picking them in the proposed rwb-dominating set.

\begin{rrule}\label{rul:two_ij} For $p=1,2,\ldots,i-2$, in this
order, apply Rule $2.p$ repeatedly till it no longer causes any changes
in the graph.

\textbf{Rule}~$\mathbf{2.p}$ Let $b=jk$ if $p=1$, and $b=jk^{p}+k^{p-1}+k^{p-2}\cdots+k$
if $2\le p\le i-2$. If a set of $(i-p)$ vertices $U=\{u_{1},u_{2},\ldots,u_{i-p}\}$,
none of which is red, has more than $b$ common black neighbors, then
let $B$ be this set of neighbors. 

\begin{enumerate}
\item Color all the vertices in $B$ white. 
\item Add to the graph $(i-p)$ new (gadget) vertices $X=\{x_{1},x_{2},\ldots,x_{i-p}\}$
and all the edges $\{u,x\};u\in U,x\in X$, as in Figure~\ref{fig:Rule_2_ij}. 
\item Color all the vertices in $X$ black. 
\end{enumerate}
\end{rrule}

\begin{figure}
\includegraphics[bb=-350bp -5bp 770bp 466bp,scale=0.22]{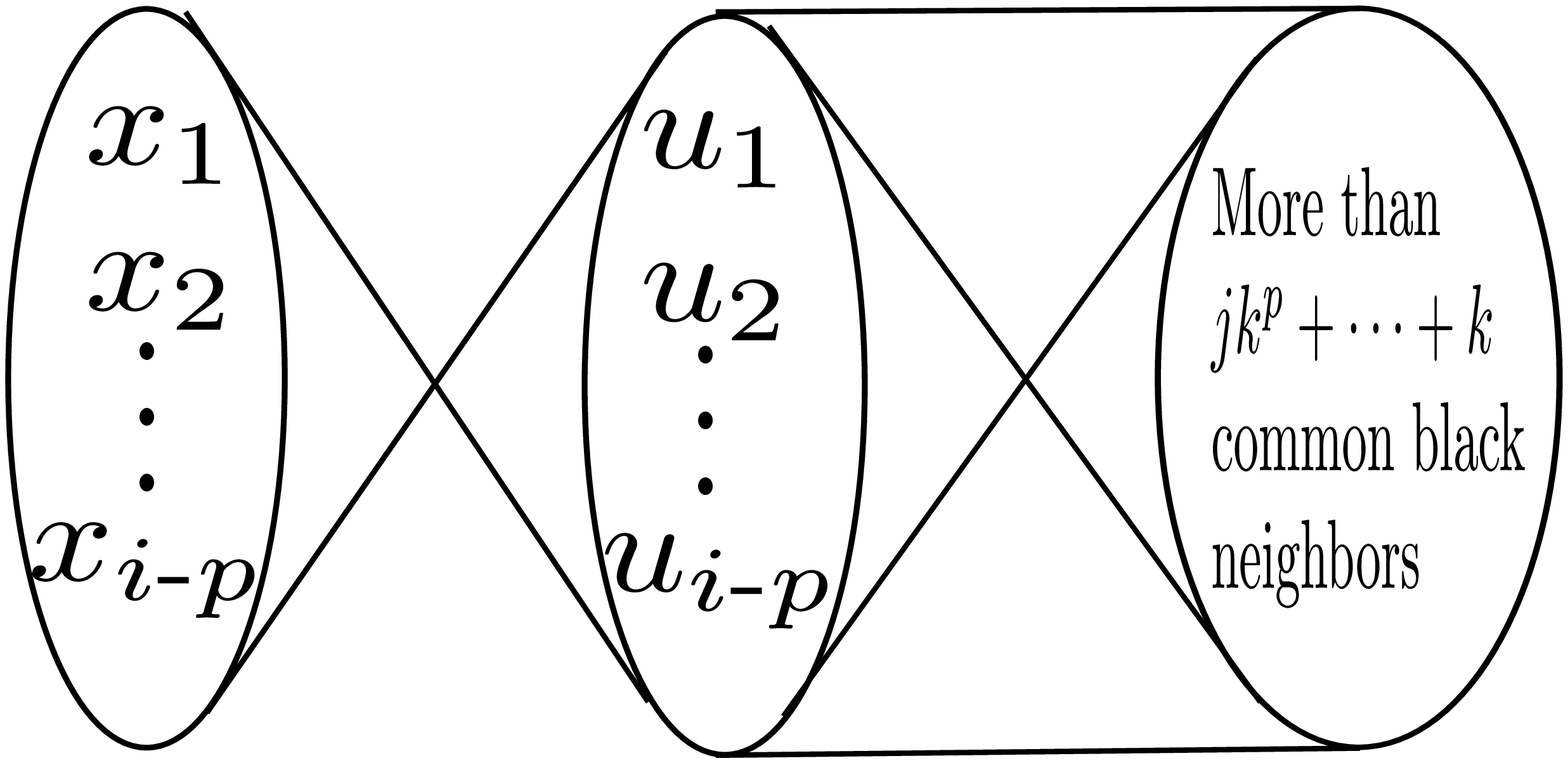} 

\caption{\label{fig:Rule_2_ij}Rule~\ref{rul:two_ij}}

\end{figure}

\begin{numclaim}\label{cla:rule2p_forces_one_ij}Consider the application
of Rule~$2.p,1\le p\le i-2$. If $U$ is a set of vertices of $G$
that satisfies the condition in Rule $2.p$, then at least one vertex
in $U$ must be in any subset of $V(G)$ of size at most $k$ that
dominates $B$. \end{numclaim} 

\begin{proof}
We give a proof when $p=1$. The proof for larger values of $p$ is
along similar lines by reducing it to the case for smaller values
of $p$ as in the proof of Claim~\ref{cla:forced_red_single_ij}
below.

When $p=1$, suppose that there is a rwb-dominating set $D$ of $G$
of size at most $k$ that does not contain any vertex of $U$. Since
$U$ has more than $b=jk$ common black neighbors, there is a vertex
in $D$ that dominates at least $j+1$ common black neighbors of $U$
(possibly including itself). That vertex along with $U$ forms a $K_{i,j}$
in $G$, contradicting either the property of the input graph or the
first invariant for the rules. \qed 
\end{proof}
\begin{lemma}
\label{lem:rule2_correct_ij} Rule~$2.p$ is correct for $1\le p\le i-2$. 
\end{lemma}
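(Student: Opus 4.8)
The plan is to verify the correctness condition of Definition~\ref{def:rule_correctness_ij}: we must show that after one application of Rule~$2.p$ turning $(G,k)$ into $(G',k)$, the graph $G$ has an rwb-dominating set of size $k$ if and only if $G'$ does. Since $k$ is unchanged (Footnote~\ref{fn:k-no-change}), I would argue both directions by taking a dominating set on one side and converting it to one on the other side of the same size. The key structural facts to keep in mind are: (i) by Claim~\ref{cla:rule2p_forces_one_ij}, any size-$\le k$ set dominating $B$ must contain at least one vertex $u_\ell \in U$; and (ii) the rule only recolors the vertices of $B$ from black to white and adds the gadget vertices $X$ together with all edges between $U$ and $X$.

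For the forward direction, let $D$ be an rwb-dominating set of $G$ with $|D| \le k$. Since $D$ dominates $B_G \supseteq B$, Claim~\ref{cla:rule2p_forces_one_ij} gives some $u_\ell \in D \cap U$. I claim $D$ is already an rwb-dominating set of $G'$: the red vertices of $G'$ are exactly those of $G$ (the rule colors nothing red), so $R_{G'} = R_G \subseteq D$; and $D$ must dominate $B_{G'} = (B_G \setminus B) \cup X$. Vertices in $B_G \setminus B$ are dominated exactly as before since $G$ and $G'$ agree on that part of the graph, and every vertex of $X$ is adjacent to $u_\ell \in D$, so $X$ is dominated. Hence $D$ works for $G'$ with the same size.

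For the reverse direction, let $D'$ be an rwb-dominating set of $G'$ with $|D'| \le k$. The new gadget vertices $X$ are black and their only neighbors in $G'$ are the vertices of $U$; so to dominate $X$, the set $D'$ must contain either some vertex of $X$ or some vertex of $U$. If $D'$ contains a gadget vertex $x_m$, I replace it by the corresponding $u_m \in U$ (this only helps: $u_m$ dominates all of $X$ and all of $u_m$'s neighbors in the original graph, and $u_m$ is not red, so no invariant is violated). After this swap we have a set $D$ of size $\le k$ with $D \subseteq V(G)$, containing some $u_\ell \in U$, and containing $R_G$. It remains to check $D$ dominates $B_G$: vertices in $B_G \setminus B$ are dominated because $G$ and $G'$ coincide on edges incident to $V(G)\setminus X$ and $D'$ dominated them; and every vertex of $B$ is adjacent to $u_\ell \in D$ since $B$ is precisely the common black neighborhood of $U$. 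Hence $D$ is an rwb-dominating set of $G$ of size $\le k$.

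The main obstacle — really the only subtle point — is the bookkeeping in the reverse direction: one has to be careful that swapping a gadget vertex $x_m$ for $u_m$ does not break domination of anything that $x_m$ was responsible for (it was responsible only for itself and possibly other $x$'s, all handled by $u_m$) and does not violate Invariant~\ref{rem:red_nbrs_white_ij} (it cannot, since $u_m \notin R_G$ by hypothesis of the rule, and its neighbors' colors are unaffected). The case analysis "some vertex of $X$ in $D'$ versus some vertex of $U$ in $D'$" must be stated so that after the swap we are guaranteed a $u_\ell \in U \cap D$, which is what lets $B$ be dominated. For $p \ge 2$ the argument is identical once Claim~\ref{cla:rule2p_forces_one_ij} is invoked with the appropriate bound $b$, exactly as the claim's proof reduces larger $p$ to smaller $p$; I would simply remark that the reasoning above used only the conclusion of the claim and the local structure of the gadget, both of which are available for all $1 \le p \le i-2$.
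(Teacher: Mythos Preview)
Your forward direction is fine and matches the paper. The reverse direction, however, has a real gap. You assert that a gadget vertex $x_m$ ``was responsible only for itself and possibly other $x$'s, all handled by $u_m$''. But by construction $X$ is an independent set and the only edges added are those between $U$ and $X$, so $N_{G'}(x_m)=U$ (you even state this earlier). In particular, since the rule merely requires that the $u_j$ be non-red, some $u_j\in U$ may well be \emph{black}, and $x_m\in D'$ may be the only vertex of $D'$ dominating that $u_j$ in $G'$. After your swap $x_m\mapsto u_m$ there is no reason $u_m$ and $u_j$ are adjacent in $G$, so $u_j$ can be left undominated. Concretely, with $|U|=|X|=3$, the $u_j$ black and pairwise non-adjacent, $D'\cap X=\{x_1\}$, $D'\cap U=\{u_2\}$, and $u_3$ having no $G$-neighbour in $D'$ (possible since $B$ is white in $G'$ and need not meet $D'$), the vertex $u_3$ is dominated by $x_1$ in $G'$ but by nothing in your $D=(D'\setminus\{x_1\})\cup\{u_1\}$.

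The paper avoids this via a three-way case split. If $D'\cap U=\emptyset$ then, because $X$ is independent with $N_{G'}(X)=U$, necessarily $X\subseteq D'$, and one takes $D:=(D'\setminus X)\cup U$; now every $u_j$ lies in $D$. If $D'\cap X=\emptyset$ one takes $D:=D'$; then no black $u_j$ was dominated through $X$, so nothing is lost. In the remaining case $D'\cap U\neq\emptyset$ and $D'\cap X\neq\emptyset$, the paper sets $D:=(D'\setminus X)\cup\{b\}$ for an arbitrary $b\in B$: since every $u_j$ is adjacent in $G$ to every vertex of $B$, this $b$ re-dominates any black $u_j$ that had relied on $X$, while the surviving $u_\ell\in D'\cap U\subseteq D$ dominates $B$ itself. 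Your argument can be repaired in the same way, but the bare swap $x_m\mapsto u_m$ does not suffice.
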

\begin{proof}
If $G$ has an rwb-dominating set $D$ of size $k$, then $D\cap U\ne\emptyset$
by Claim~\ref{cla:rule2p_forces_one_ij}. So $D':=D$ is an rwb-dominating
set of $G'$, since $D\cap U$ dominates $X$. For the other direction,
assume that $D'$ exists. If $D'\cap U=\emptyset$ then since $D'$
dominates $X$ and $X$ is independent, $X\subseteq D'$, and so set
$D:=D'\setminus X\cup U$. If $D'\cap X=\emptyset$ then since $D'$
dominates $X$, $D'\cap U\ne\emptyset$, and so set $D:=D'$. If $D'\cap U\ne\emptyset$
and $D'\cap X\ne\emptyset$ then pick an arbitrary vertex $b\in B$
and set $D:=D'\setminus X\cup\{b\}$.\qed 
\end{proof}
\begin{rrule}\label{rul:three_ij}If a black or white vertex $u$
has more than $jk^{i-1}+k^{i-2}+\cdots+k^{2}+k$ black neighbors,
then color $u$ red and color all the black neighbors of $u$ white.
\end{rrule}

\begin{numclaim}\label{cla:forced_red_single_ij} Let $G$ be reduced
with respect to Rule~\ref{rul:one_ij} and Rules~$2.1$ to~$2.(i-2)$.
If a black or white vertex $u$ of $G$ has more than $h=jk^{i-1}+k^{i-2}+\cdots+k^{2}+k$
black neighbors (let this set of neighbors be $B$), then $u$ must
be in any subset of $V(G)$ of size at most $k$ that dominates $B$.
\end{numclaim} 

\begin{proof}
Let $S\subseteq V(G)$ be a set of size at most $k$ that dominates
$B$. If $S$ does not contain $u$, then there is a $v\in S$ that
dominates at least $(h/k)+1$ of the vertices in $B$. The vertex
$v$ is not red (because of the second invariant), and $u,v$ have
$h/k>jk^{i-2}+k^{i-3}+\cdots+1$ common black neighbors, a contradiction
to the fact that $G$ is reduced with respect to Rule~$2.(i-2)$.
\qed 
\end{proof}
This proves the correctness of Rule~\ref{rul:three_ij} on graphs
reduced with respect to Rule~\ref{rul:one_ij} and Rules~$2.1$
to~$2.(i-2)$.

\begin{rrule}\label{rul:four_ij}If a white vertex \emph{$u$} is
adjacent to at most one black vertex, then delete $u$ and apply Rule~\ref{rul:one_ij}.
\end{rrule}

It is easy to see that Rule~\ref{rul:four_ij} is correct, since
if $u$ has no black neighbor in $G$ then $u$ has no role in dominating
$B_{G}$; if $u$ has a single black neighbor $v$ then we can replace
$u$ with $v$ in $D'$.

\begin{rrule}\label{rul:five_ij}If there is a white vertex $u$
and a white or black vertex $v$ such that $N(u)\cap B_{G}\subseteq N(v)\cap B_{G}$,
then delete $u$ and apply Rule~\ref{rul:one_ij}.

\end{rrule}

The correctness of this rule follows from the fact that if $D$ chooses
$u$, then we can choose $v$ in $D'$.

\begin{rrule}\label{rul:six_ij}If $|R_{G}|>k$ or $|B_{G}|>jk^{i}+k^{i-1}+k^{i-2}+\cdots+k^{2}$
then output {}``NO''. \end{rrule}

The correctness of the rule when $|R_{G}|>k$ is obvious as the proposed
dominating set we construct should contain all of $R_{G}$. Note that
in a graph $G$ reduced with respect to Rules~1 to 5, no white or
black vertex has more than $jk^{i-1}+k^{i-2}+\cdots+k$ black neighbors,
or else Rule~\ref{rul:three_ij} would have applied, contradicting
the third invariant. Hence $k$ of these vertices can dominate at
most $jk^{i}+k^{i-1}+k^{i-2}+\cdots+k^{2}$ black vertices and hence
if $|B_{G}|>jk^{i}+k^{i-1}+k^{i-2}+\cdots+k^{2}$, the algorithm is
correct in saying {}``NO''.

\subsection{Algorithm correctness and kernel size\label{sub:kernel_algorithm_correctness_ij}}

The following claim giving the correctness of the kernelization algorithm
follows from the correctness of the reduction rules.

\begin{numclaim}\label{cla:alg_preserve_k_ij}Let $G$ be the input
rwb-graph and $H$ the rwb-graph constructed by the algorithm after
applying all the reduction rules. Then $G$ has an rwb-dominating
set of size at most $k$ if and only if there is an rwb-dominating
set of size at most $k$ in $H$. \end{numclaim}

Now we move on to prove a polynomial bound on the size of the reduced
instance. 

\begin{lemma}
\label{lem:algorithm_correctness_ij} Starting with a $K_{i,j}$-free
rwb-graph $G$ as input, if the kernelization algorithm says {}``NO''
then $G$ does not have an rwb-dominating set of size at most $k$.
Otherwise, if the algorithm outputs the rwb-graph $H$, then $|V(H)|=O((j+1)^{i+1}k^{i^{2}})$. 
\end{lemma}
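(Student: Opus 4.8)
The plan is to prove the two assertions separately. The first, soundness of the "NO" answer, is immediate: Claim~\ref{cla:alg_preserve_k_ij} tells us the algorithm preserves the existence of a small rwb-dominating set across all reduction rules, and the only rule that outputs "NO" is Rule~\ref{rul:six_ij}, whose correctness was argued in place. So if the algorithm says "NO", then on the current reduced graph there is no rwb-dominating set of size at most $k$, and chaining back through the equivalences of Claim~\ref{cla:alg_preserve_k_ij}, neither does $G$.

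The bulk of the work is bounding $|V(H)|$ when the algorithm does not reject. Here $H$ is reduced with respect to every rule (by Invariant~\ref{rem:rules_no_regress_ij}, earlier rules stay satisfied), and in particular Rule~\ref{rul:six_ij} did not fire, so $|R_H|\le k$ and $|B_H|\le jk^{i}+k^{i-1}+\cdots+k^{2}=O(jk^i)$. It therefore suffices to bound the number of white vertices, since $|V(H)|=|R_H|+|W_H|+|B_H|$. The idea is that, because $H$ is reduced with respect to Rule~\ref{rul:five_ij}, the sets $N(u)\cap B_H$ for distinct white vertices $u$ form an antichain under inclusion (no one contained in another), and by Rule~\ref{rul:four_ij} each such set has size at least $2$. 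I would then partition the white vertices according to the size $t$ of $N(u)\cap B_H$, which ranges from $2$ up to the Rule~\ref{rul:three_ij} bound $h=jk^{i-1}+k^{i-2}+\cdots+k$. For a fixed $t$, I want to bound the number of distinct $t$-subsets of $B_H$ that can appear as $N(u)\cap B_H$ for some white $u$.

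The key combinatorial point is that $H$ is $K_{i,j}$-free and reduced with respect to Rules~$2.1,\dots,2.(i-2)$. The first fact means that no $i$ vertices have $j$ common black neighbors restricted to $B_H$; more usefully, for $1\le p\le i-2$ the reducedness with respect to Rule~$2.p$ says no $(i-p)$ non-red vertices have more than $b_p$ common black neighbors, where $b_p=jk^p+k^{p-1}+\cdots+k=O(jk^p)$. I would use a standard counting/sunflower-type argument: count pairs (white vertex $u$, set of $(i-1)$ black neighbors of $u$); on one hand this is $\binom{t}{i-1}$ times the number of white vertices with $|N(u)\cap B_H|=t$ that I am trying to bound; on the other hand each $(i-1)$-subset of $B_H$ can be the common neighborhood-subset of at most $b_1=jk$ white vertices (else those white vertices together with this $(i-1)$-set would give a $K_{i,j}$ in $H$, since white vertices are not red so Invariant~\ref{rem:no_new_kij} is not violated and $H$ is genuinely $K_{i,j}$-free). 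Hence the number of white $u$ with a given neighborhood size is at most $\frac{jk\cdot\binom{|B_H|}{i-1}}{\binom{t}{i-1}}$, and summing the cruder bound $jk\cdot\binom{|B_H|}{i-1}$ over the at most $h$ possible values of $t$, together with $|B_H|=O(jk^i)$, gives $|W_H|=O(jk\cdot (jk^i)^{i-1}\cdot h)=O((j+1)^{i+1}k^{i^2})$ after absorbing constants depending on $i$. Combined with the bounds on $|R_H|$ and $|B_H|$ this yields $|V(H)|=O((j+1)^{i+1}k^{i^2})$.

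The main obstacle I anticipate is getting the exponents to line up exactly as claimed: one has to be careful that the $(i-1)$-subset counting uses $|B_H|\le jk^i+\cdots$ raised to the power $i-1$ (contributing roughly $k^{i(i-1)}$), multiplied by the per-subset multiplicity $jk$ and the number $h=O(jk^{i-1})$ of size classes (contributing another $k^{i}$ roughly), so that the total power of $k$ is $i(i-1)+1+(i-1)=i^2$, and the total power of $(j+1)$ is $(i-1)+1+1=i+1$; the polynomial in $k$ of lower-order terms hidden inside each $O(jk^{\ell})$ factor must be checked not to inflate the leading exponent. A secondary subtlety is making sure that in the $K_{i,j}$-extraction step the $i$ vertices are genuinely distinct and the common neighbors genuinely number $j$ — this is where the "at least $2$ black neighbors" from Rule~\ref{rul:four_ij} and the antichain property from Rule~\ref{rul:five_ij} are needed to rule out degenerate cases (e.g., a white vertex whose black-neighborhood already coincides with the chosen $(i-1)$-set).
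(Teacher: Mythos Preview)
Your plan is sound through the bounds on $|R_H|$ and $|B_H|$, and the idea of double-counting pairs consisting of a white vertex and a small subset of its black neighbourhood is the right one. But the key step is wrong: you claim that an $(i-1)$-subset of $B_H$ can be contained in the black neighbourhood of at most $b_1=jk$ white vertices because ``those white vertices together with this $(i-1)$-set would give a $K_{i,j}$''. They would not. An $(i-1)$-set of black vertices together with $m$ common white neighbours forms only a $K_{i-1,m}$, and since $i-1<i\le j$ this never contains $K_{i,j}$, however large $m$ is. Nor does reducedness with respect to Rule~$2.1$ help: that rule bounds the number of common \emph{black} neighbours of $(i-1)$ non-red vertices, whereas you need a bound on common \emph{white} neighbours of $(i-1)$ black vertices. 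So the multiplicity bound $jk$ per $(i-1)$-subset is unjustified, and with it the whole estimate for $|W_H|$.

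The fix is to double-count with $i$-subsets instead: any $i$ black vertices have at most $j-1$ common white neighbours, since otherwise we do get a $K_{i,j}$. This immediately bounds the number of white vertices with at least $i$ black neighbours by $(j-1)\binom{b}{i}\le(j-1)b^{i}$, where $b=|B_H|\le(j+1)k^{i}$. For white vertices with fewer than $i$ black neighbours your double count says nothing (for $t<i-1$ the factor $\binom{t}{i-1}$ vanishes), but distinctness of black-neighbourhoods --- already implied by Rule~\ref{rul:five_ij} --- caps their number by $\sum_{t=2}^{i-1}\binom{b}{t}\le 2b^{i-1}$. This is exactly the paper's argument, and it also dispenses with the unnecessary partition over the $h$ values of $t$.
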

\begin{proof}
The correctness of the Rule~\ref{rul:six_ij} establishes the claim
if the algorithm says {}``NO''. Now suppose the algorithm outputs
$H$ without saying {}``NO''. The same rule establishes that $|R_{H}|\le k$
and $b=|B_{H}|\le jk^{i}+k^{i-1}+\cdots+k\le(j+1)k^{i}$. Now we bound
$|W_{H}|$. Note that no two white vertices have identical black neighborhoods,
or else Rule~\ref{rul:five_ij} would have applied. Also each white
vertex has at least two black neighbors, or else Rule~\ref{rul:four_ij}
would have applied. Hence the number of white vertices that have less
than $i$ black neighbors is at most ${b \choose 2}+{b \choose 3}+\cdots+{b \choose i-1}\le2b^{i-1}$.
No set of $i$ black vertices has more than $(j-1)$ common white
neighbors, or else these form a $K_{i,j}$. Hence the number of white
vertices that have $i$ or more black neighbors is at most ${b \choose i}(j-1)\le(j-1)b^{i}$.
The bound in the lemma follows.\qed 
\end{proof}
The algorithm can be implemented in $O(n^{i+O(1)})$ time, as the
main Rule~$2$ can be applied by running through various subsets
of $V(G)$ of size $p$ for $p$ ranging from~$1$ to $i-2$. Thus,
we have 

\begin{lemma}
\label{lem:kij_poly_kernel_rwb} For any fixed $j\ge i\ge1$, the
\noun{rwb-Dominating Set} problem (with parameter $k$) on $K_{i,j}$-free
graphs has a polynomial kernel with $O((j+1)^{i+1}k^{i^{2}})$ vertices. 
\end{lemma}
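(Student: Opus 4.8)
The plan is to obtain the lemma by bolting together the pieces already assembled in Section~\ref{sec:kij-free kernel}. First I would dispose of the trivial cases $i=1$ (and, automatically, $i\le j$ with $i=1$): as remarked at the start of the section, $k$-Dominating Set on $K_{1,j}$-free graphs, and hence rwb-Dominating Set on such graphs (obtained by a suitable colouring), already admits a linear kernel, which is certainly $O((j+1)^{i+1}k^{i^2})$. So from now on assume $j\ge i\ge 2$, and the claim is that the kernelization algorithm of Section~\ref{sub:rules_ij} is a valid kernelization with the stated output size.

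Three things then need to be checked. \emph{(Running time.)} Each of Rules~\ref{rul:one_ij}--\ref{rul:six_ij} is testable and applicable in polynomial time; the only nonroutine point is Rule~\ref{rul:two_ij}, where for each $p\in\{1,\dots,i-2\}$ one enumerates all $(i-p)$-subsets $U$ of $V(G)$ and counts their common black neighbours, at cost $O(n^{i+O(1)})$ per pass. To see that the exhaustive application in the prescribed order halts after polynomially many steps, I would exhibit a lexicographic progress measure, e.g.\ (number of black vertices, then number of white vertices): Rules~\ref{rul:one_ij}, \ref{rul:three_ij} and every Rule~$2.p$ strictly decrease the number of black vertices (in Rule~$2.p$ the set $B$ has more than $b\ge jk$ vertices, all recoloured white, against only $i-p<i\le jk$ new black gadget vertices, so the black count drops), while Rules~\ref{rul:four_ij} and~\ref{rul:five_ij} do not increase the black count and strictly decrease the white count whenever the black count is unchanged. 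Both components stay polynomially bounded in $n$, so there are only polynomially many rule applications and the algorithm runs in $O(n^{i+O(1)})$ time.

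\emph{(Equivalence.)} Every reduction rule is correct in the sense of Definition~\ref{def:rule_correctness_ij}: Rules~\ref{rul:one_ij}, \ref{rul:four_ij}, \ref{rul:five_ij}, \ref{rul:six_ij} by the short arguments following their statements, each Rule~$2.p$ by Lemma~\ref{lem:rule2_correct_ij}, and Rule~\ref{rul:three_ij} by Claim~\ref{cla:forced_red_single_ij}. Crucially, the correctness arguments for Rules~\ref{rul:three_ij} and~\ref{rul:six_ij} presuppose that the graph is already reduced with respect to all earlier rules; Invariant~\ref{rem:rules_no_regress_ij} guarantees that this property is never destroyed by a later rule, so those arguments are valid at every point in the run. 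Chaining the per-rule equivalences, all with $k'=k$ by footnote~\ref{fn:k-no-change}, yields Claim~\ref{cla:alg_preserve_k_ij}: the output rwb-graph $H$ has an rwb-dominating set of size at most $k$ iff $G$ does. If instead Rule~\ref{rul:six_ij} fires and the algorithm outputs ``NO'', then by Lemma~\ref{lem:algorithm_correctness_ij} the instance $G$ genuinely has no rwb-dominating set of size at most $k$, and the kernelization returns any fixed constant-size no-instance. \emph{(Size.)} By Lemma~\ref{lem:algorithm_correctness_ij}, when $H$ is produced we have $|V(H)|=O((j+1)^{i+1}k^{i^2})$, and since $k'=k\le k$, the pair $(H,k')$ is a kernel of the required size.

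The main obstacle is the mild one buried in the running-time step: verifying that the exhaustive, ordered application of the rules actually terminates in polynomially many steps, i.e.\ pinning down a single measure that decreases under every rule — delicate only because Rule~\ref{rul:two_ij} adds gadget vertices, though it always removes strictly more black vertices than it adds. Once that is in place, everything else is bookkeeping on top of the already-proved Lemmas~\ref{lem:rule2_correct_ij} and~\ref{lem:algorithm_correctness_ij}, Claims~\ref{cla:rule2p_forces_one_ij}, \ref{cla:forced_red_single_ij}, \ref{cla:alg_preserve_k_ij}, and the invariants of Section~\ref{sec:kij-free kernel}.
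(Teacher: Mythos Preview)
Your proposal is correct and follows essentially the same approach as the paper: both derive the lemma directly from the kernelization algorithm of Section~\ref{sub:rules_ij}, invoking Claim~\ref{cla:alg_preserve_k_ij} for equivalence, Lemma~\ref{lem:algorithm_correctness_ij} for the size bound, and a brief running-time observation. The paper's own argument is in fact much terser---it merely notes that Rule~\ref{rul:two_ij} can be implemented by enumerating $p$-subsets and concludes---whereas you additionally supply an explicit lexicographic progress measure to certify polynomial termination; this is a welcome elaboration rather than a different route.
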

To obtain a polynomial kernel for the \noun{$k$-Dominating Set} problem
on $K_{i,j}$-free graphs, we first color all the vertices black and
use Lemma~\ref{lem:kij_poly_kernel_rwb} on this \noun{rwb-Dominating
Set} problem instance. To transform the reduced colored instance $H$
to an instance of (the uncolored) \noun{$k$-dominating Set}, we can
start by deleting all the red vertices, since they have no black neighbors.
But we need to capture the fact that the white vertices need not be
dominated. This can be done by, for example, adding a new vertex $v_{x}$
adjacent to every vertex $x$ in $W_{H}$ of the reduced graph $H$,
and attaching $k+|W_{H}|+1$ separate pendant vertices to each of
the vertices $v_{x}$. It is easy to see that the new graph does not
have a $K_{i,j}$, $j\geq i\geq2$, if $H$ does not have one and
that $H$ has at most $k$ black or white vertices dominating $B_{H}$
if and only if the resulting (uncolored) graph has a dominating set
of size at most $|W_{H}|+k$. Thus after reducing to the uncolored
version, $k$ becomes $k+|W_{H}|$ and the number of vertices increases
by $(k+|W_{H}|+2)\cdot|W_{H}|$. Hence by Lemma~\ref{lem:kij_poly_kernel_rwb},
we have 

\begin{theorem}
\label{thm:kij_poly_kernel} For any fixed $j\ge i\ge1$, the \noun{$k$-Dominating
Set} problem on $K_{i,j}$-free graphs has a polynomial kernel with
$O((j+1)^{2(i+1)}k^{2i^{2}})$ vertices. 
\end{theorem}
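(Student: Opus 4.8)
The plan is to reduce the (uncolored) \noun{$k$-Dominating Set} problem to the \noun{rwb-Dominating Set} problem, apply Lemma~\ref{lem:kij_poly_kernel_rwb}, and then translate the reduced coloured instance back to an uncolored one while controlling both the new parameter and the number of vertices. First I would dispose of the boundary case $i=1$ by the linear-kernel observation made at the start of Section~\ref{sec:kij-free kernel}, so that it suffices to treat $j\ge i\ge 2$. Given a $K_{i,j}$-free graph $G$ and parameter $k$, colour every vertex of $G$ black; as noted in the discussion of the \noun{rwb-Dominating Set} problem, $G$ has a dominating set of size at most $k$ if and only if this rwb-instance has an rwb-dominating set of size at most $k$. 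Running the kernelization algorithm of Section~\ref{sub:rules_ij} then either reports {}``NO'' — in which case we are done, since by Lemma~\ref{lem:algorithm_correctness_ij} $G$ has no such dominating set — or produces an rwb-graph $H$ with $|V(H)|=O((j+1)^{i+1}k^{i^2})$ vertices, of which at most $k$ are red, at most $jk^i+k^{i-1}+\cdots+k$ are black, and the remainder white.

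Next I would perform the ``decolouring'' step already sketched in the paragraph preceding the theorem. Delete all red vertices of $H$ (by invariant~\ref{rem:red_nbrs_white_ij} and the fact that the algorithm is reduced with respect to Rule~\ref{rul:four_ij}, red vertices of $H$ have no black neighbours, so they play no role). To encode that white vertices need not be dominated, add for each white vertex $x\in W_H$ a fresh vertex $v_x$ adjacent exactly to $x$, and attach $k+|W_H|+1$ pendant vertices to each $v_x$. Call the resulting uncolored graph $H'$ and set $k'=k+|W_H|$. The two things to check are (i) $H'$ is $K_{i,j}$-free and (ii) $H$ has a set of at most $k$ black-or-white vertices dominating $B_H$ if and only if $H'$ has a dominating set of size at most $k'$. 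For (i): each $v_x$ has degree $1+(k+|W_H|+1)$ but all of those neighbours except $x$ have degree $1$, so no $v_x$ and no pendant vertex can lie in any $K_{2,2}$, hence in any $K_{i,j}$ with $j\ge i\ge2$; thus a $K_{i,j}$ in $H'$ would already live inside $H$, which is $K_{i,j}$-free by invariant~\ref{rem:no_new_kij}. For (ii): the huge pendant bundles force any dominating set of size $\le k'$ to contain every $v_x$, which frees the corresponding white vertices from needing to be dominated; conversely a black-or-white set dominating $B_H$, together with all the $v_x$'s, dominates $H'$, and the size bookkeeping gives exactly $k'=k+|W_H|$. I expect this verification — in particular pinning down (ii) carefully in both directions — to be the only genuinely delicate point, though it is routine once the degree-$1$ structure of the gadget is observed.

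Finally I would assemble the size bound. The construction adds $(k+|W_H|+2)\cdot|W_H|$ new vertices, so $|V(H')|\le |V(H)|+(k+|W_H|+2)|W_H|=O(|V(H)|^2)$; and $k'=k+|W_H|=O(|V(H)|)$. Since $|V(H)|=O((j+1)^{i+1}k^{i^2})$ by Lemma~\ref{lem:kij_poly_kernel_rwb}, we get $|V(H')|=O\!\big((j+1)^{2(i+1)}k^{2i^2}\big)$ and $k'=O\!\big((j+1)^{i+1}k^{i^2}\big)$, both bounded by the required function of $k$ (with $i,j$ fixed). As the whole procedure runs in $O(n^{i+O(1)})$ time (the cost of the algorithm of Section~\ref{sub:rules_ij} plus linear-time gadget surgery), $(H',k')$ is a polynomial kernel of the stated size, completing the proof.
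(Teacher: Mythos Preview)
Your proposal follows the paper's approach essentially line for line: colour everything black, invoke Lemma~\ref{lem:kij_poly_kernel_rwb}, delete the red vertices, hang a private $v_x$ with $k+|W_H|+1$ pendants off each white $x$, and set the new parameter to $k+|W_H|$. The verification of $K_{i,j}$-freeness and the size bookkeeping are exactly what the paper sketches, and your treatment of the $i=1$ boundary case is fine.

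There is, however, one genuine (if easily repaired) gap in the equivalence chain --- and it is present in the paper's sketch as well, so you have faithfully reproduced it. You establish
\[
\text{($G$ has DS $\le k$)} \;\Longleftrightarrow\; \text{($H$ has rwb-DS $\le k$)}
\quad\text{and}\quad
\text{($\le k$ b/w vertices of $H$ dominate $B_H$)} \;\Longleftrightarrow\; \text{($H'$ has DS $\le k'$)},
\]
but the link ``$H$ has rwb-DS $\le k$'' $\Leftrightarrow$ ``$\le k$ black-or-white vertices of $H$ dominate $B_H$'' only holds in the forward direction. Backward, a set $S'\subseteq W_H\cup B_H$ of size $\le k$ dominating $B_H$ yields an rwb-dominating set $S'\cup R_H$ of size $\le k+|R_H|$, not $\le k$; and since the reduction rules (Rules~2.$p$ and~3) are calibrated to the specific parameter $k$, Claim~\ref{cla:alg_preserve_k_ij} does not let you pull this back to a $k$-dominating set of $G$. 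Concretely, Rule~1 already breaks the na\"{\i}ve equivalence: an isolated black vertex is forced into any dominator of $B$, but once recoloured red it disappears from both $B$ and the pool of allowed b/w vertices, so the ``$\le k$ b/w vertices dominate $B$'' predicate becomes strictly easier.

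The fix is immediate and does not affect any asymptotic bound: set $k':=(k-|R_H|)+|W_H|$ instead of $k+|W_H|$. Then an rwb-dominating set of $H$ of size $\le k$ gives $\le k-|R_H|$ b/w vertices dominating $B_H$, hence a dominating set of $H'$ of size $\le k'$; and conversely a dominating set of $H'$ of size $\le k'$ forces all $v_x$ in, leaving $\le k-|R_H|$ b/w vertices dominating $B_H$, which together with $R_H$ is an rwb-dominating set of $H$ of size $\le k$. Since $k'\le k+|W_H|$, your size estimates go through unchanged.
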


\section{\label{sec:kernel_d_degenerate}A Polynomial Kernel for $d$-degenerate
Graphs}

A $d$-degenerate graph does not contain $K_{d+1,d+1}$ as a subgraph,
and so the kernelization algorithm of the previous section can be
applied to a $d$-degenerate graph, setting $i=j=d+1$. The algorithm
runs in time $O((d+1)^{2}n^{d+O(1)})$ and constructs a kernel with~$O((d+2)^{2(d+2)}\cdot k^{2(d+1)^{2}})$
vertices. Since a $d$-degenerate graph on $v$ vertices has at most
$dv$ edges, we have: 

\begin{corollary}
\label{cor:d_degenerate_poly_kernel}The \noun{$k$-Dominating Set}
problem on $d$-degenerate graphs has a kernel on~$O((d+2)^{2(d+2)}\cdot k^{2(d+1)^{2}})$
vertices and edges. 
\end{corollary}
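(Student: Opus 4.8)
The plan is to obtain the vertex bound directly from Theorem~\ref{thm:kij_poly_kernel} and then to observe that, on a $d$-degenerate input, the graph the kernelization produces is itself $d$-degenerate, so that its edge count is linear in its vertex count. For the vertex bound, note that every vertex of $K_{d+1,d+1}$ has degree $d+1$, so $K_{d+1,d+1}$ has no vertex of degree at most $d$; since every subgraph of a $d$-degenerate graph does have such a vertex, no $d$-degenerate graph contains $K_{d+1,d+1}$ as a subgraph. Hence Theorem~\ref{thm:kij_poly_kernel} applies with $i=j=d+1$ and produces a kernel $\widehat{G}$ on $O((d+2)^{2(d+2)}k^{2(d+1)^2})$ vertices, and it remains only to bound $|E(\widehat{G})|$.

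The core of the argument is to show that $\widehat{G}$ is $d$-degenerate. First I would show, by induction on the sequence of rule applications, that the reduced rwb-graph $H$ of Lemma~\ref{lem:kij_poly_kernel_rwb} is $d$-degenerate. Rules~\ref{rul:one_ij} and~\ref{rul:three_ij} only recolour vertices, Rules~\ref{rul:four_ij} and~\ref{rul:five_ij} only delete vertices, and each of these operations preserves $d$-degeneracy. The only rule that adds vertices is Rule~\ref{rul:two_ij}: each new gadget vertex introduced by Rule~$2.p$ is adjacent precisely to the $i-p\le i-1=d$ vertices of the corresponding set $U$, and the gadget set $X$ is independent, so each new vertex has degree at most $d$; since adding a vertex of degree at most $d$ to a $d$-degenerate graph keeps it $d$-degenerate (peel that vertex off first), the inductive step goes through. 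Then $\widehat{G}$ is obtained from $H$ by deleting the red vertices (a subgraph of a $d$-degenerate graph, hence still $d$-degenerate) and, for each white vertex $x$ of $H$, adding a vertex $v_x$ joined to $x$ and to $k+|W_H|+1$ fresh pendants. Every pendant has degree $1$ in $\widehat{G}$, and every $v_x$ has exactly one neighbour of degree larger than $1$, namely $x$; hence any subgraph of $\widehat{G}$ that contains a pendant or some $v_x$ already has a vertex of degree at most $1$, and any subgraph that contains none of the newly added vertices is a subgraph of the $d$-degenerate graph $H$ minus its red vertices. So every subgraph of $\widehat{G}$ has a vertex of degree at most $d$, i.e.\ $\widehat{G}$ is $d$-degenerate.

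Finally, a $d$-degenerate graph on $v$ vertices has at most $dv$ edges, so $|E(\widehat{G})|\le d\cdot|V(\widehat{G})|$, and together with the bound on $|V(\widehat{G})|$ this gives that $\widehat{G}$ has $O((d+2)^{2(d+2)}k^{2(d+1)^2})$ vertices and edges, the constant hidden by $O(\cdot)$ being allowed to depend on $d$ as elsewhere in the paper. The only delicate point is the $d$-degeneracy argument of the middle paragraph: one must check both that the single vertex-adding rule (Rule~\ref{rul:two_ij}) cannot raise the degeneracy, and that the uncolouring gadget — which does introduce a vertex $v_x$ of non-constant degree — nevertheless cannot create a subgraph of minimum degree exceeding $d$, because all but one of $v_x$'s neighbours are degree-$1$ pendants.
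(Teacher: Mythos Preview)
Your proposal is correct and follows the same approach as the paper: observe that $d$-degenerate graphs are $K_{d+1,d+1}$-free, apply Theorem~\ref{thm:kij_poly_kernel} with $i=j=d+1$ for the vertex bound, and then use the $dv$ edge bound for $d$-degenerate graphs. The paper states this in one line and leaves implicit the fact that the kernel itself is $d$-degenerate; your careful verification that neither Rule~\ref{rul:two_ij} nor the uncolouring gadget can raise the degeneracy is exactly the detail the paper glosses over.
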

Corollary~\ref{cor:d_degenerate_poly_kernel} settles an open problem
posed by Alon and Gutner in~\cite{AlonGutner2008}.

\subsection{Improving the running time\label{sub:faster_d_degenerate}}

We describe a modification of our algorithm to $d$-degenerate graphs
that makes use of the following well known property of $d$-degenerate
graphs, to reduce the running time to $O(2^{d}\cdot dn^{2})$; the
bound on the kernel size remains the same.

\begin{fact}\label{fac:degenerate_ordering}\cite[Theorem 2.10]{FranceschiniLuccioPagli2006}
Let $G$ be a $d$-degenerate graph on $n$ vertices. Then one can
compute, in $O(dn)$ time, an ordering $v_{1},v_{2},\ldots,v_{n}$
of the vertices of $G$ such that for $1\le i\le n$, $v_{i}$ has
at most $d$ neighbors in the subgraph of $G$ induced on $\{v_{i+1},\ldots,v_{n}\}$.\end{fact}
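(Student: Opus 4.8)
The plan is to prove this by the standard repeated-minimum-degree peeling argument, organized around a bucket data structure so that the whole computation runs in $O(dn)$ time. First I would observe that, since $G$ is $d$-degenerate, \emph{every} subgraph of $G$ — in particular every induced subgraph on a vertex set $\{v_t,v_{t+1},\ldots,v_n\}$ obtained during the peeling — has a vertex of degree at most $d$. So the following greedy procedure is well-defined: set $G_1 := G$; for $t = 1, 2, \ldots, n$, let $v_t$ be any vertex of minimum degree in $G_t$ (which has degree $\le d$ by $d$-degeneracy), record it, and set $G_{t+1} := G_t - v_t$. By construction $v_t$ has at most $d$ neighbors in $G_t$, and the neighbors of $v_t$ in $G_t$ are exactly its neighbors among $\{v_t,\ldots,v_n\}$, hence among $\{v_{t+1},\ldots,v_n\}$; this is precisely the ordering property claimed (indexing so that $v_i$ is the $i$-th vertex removed).

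The remaining work is the $O(dn)$ running-time bound, and this is the step I expect to require the most care. The key point is that in a $d$-degenerate graph $|E(G)| \le dn$, so $G$ can be stored using adjacency lists of total size $O(dn)$, and each vertex carries a current-degree counter. I would maintain an array of buckets indexed by degree $0, 1, \ldots, n-1$, where bucket $\ell$ holds (via a doubly linked list, with cross-pointers from each vertex to its list node) all currently-present vertices of current degree $\ell$. Building this takes $O(n + |E|) = O(dn)$ time. In each of the $n$ iterations, I scan buckets $0, 1, \ldots$ to find the lowest nonempty one; since the minimum degree is always $\le d$, this scan costs $O(d)$ per iteration, $O(dn)$ total. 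Removing the chosen vertex $v_t$ means, for each of its $\le d$ current neighbors $w$, decrementing $w$'s degree and moving $w$ from its current bucket to the next lower one — $O(1)$ per neighbor with the linked-list/cross-pointer representation, so $O(d)$ per iteration. (One must also be slightly careful to delete $v_t$ from each neighbor's adjacency list, or equivalently to only ever count neighbors still present; marking $v_t$ as removed and charging the cleanup to the neighbor visits keeps this within $O(d)$ amortized per step.) Summing over all $n$ iterations gives $O(dn)$, which is the claimed bound.

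The main obstacle is purely the bookkeeping: ensuring that each neighbor-update and each bucket-move is genuinely $O(1)$, which forces the linked-list-with-back-pointers representation and the invariant that degree counters always reflect only vertices not yet removed. Everything else — correctness of the ordering and the $d$-degeneracy guarantee that the minimum degree never exceeds $d$ — is immediate. Since this is a known result (cited to \cite[Theorem 2.10]{FranceschiniLuccioPagli2006}), I would present only this sketch rather than the full data-structure details.
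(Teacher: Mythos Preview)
Your sketch is correct and is the standard bucket-based peeling argument for computing a degeneracy ordering. However, the paper does not actually prove this statement: it is stated as a \emph{Fact} with a citation to \cite[Theorem~2.10]{FranceschiniLuccioPagli2006} and is used without proof. So there is no in-paper proof to compare against; you have correctly anticipated this at the end of your proposal.
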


The modification to the algorithm pertains to the way rules~$2.1$
to~$2.(d-1)$ are implemented: the rest of the algorithm remains
the same.

In implementing Rule~$2.p,1\le p\le(d-1)$, instead of checking each
$(d-p+1)$-subset of vertices in the graph to see if it satisfies
the condition in the rule, we make use of Fact~\ref{fac:degenerate_ordering}
to quickly find such a set of vertices, if it exists. Let $G$ be
the graph instance on $n$ vertices on which Rule $2.p$ is to be
applied. First we delete, temporarily, all the red vertices in $G$.
We then find an ordering $v_{1},v_{2},\ldots,v_{n}$ of the kind described
in the above fact, of all the remaining vertices in $G$. Let $U$
and $B$ be as defined in the rule. The first vertex $v_{l}$ in $U\cup B$
that appears in the ordering has to be from $B$, since each vertex
in $U$ has degree greater than $d$. The vertex $v_{l}$ will then
have a neighborhood of size $d-p+1$ that in turn has $B$ as its
common neighborhood. We use this fact to look for such a pair $(U,B)$
and exhaustively apply Rule~$2.p$ to $G$. See Algorithm~\ref{alg:faster_d_degenerate}
for pseudocode of the algorithm. We then add back the red vertices
that we deleted prior to this step, along with all their edges to
the rest of the graph.

\begin{algorithm}
\texttt{\small for $l:=1$ to $n$}{\small \par}

\texttt{\small do}{\small \par}

\texttt{\small ~~if $v_{l}$ is black and its degree in $G[v_{l+1},\ldots,v_{n}]$
is at least $d-p+1$}{\small \par}

\texttt{\small ~~then}{\small \par}

\texttt{\small ~~~~Find the neighborhood $N$ of $v_{l}$ in $G[v_{l+1},\ldots,v_{n}]$}{\small \par}

\texttt{\small ~~~~for each $(d-p+1)$-subset $S$ of $N$}{\small \par}

\texttt{\small ~~~~do}{\small \par}

\texttt{\small ~~~~~~if $S$ has more than $(d+1)k^{p}+k^{p-1}+\cdots+k$}{\small \par}

\texttt{\small ~~~~~~common black neighbors in $G$}{\small \par}

\texttt{\small ~~~~~~then}{\small \par}

\texttt{\small ~~~~~~~~Apply the three steps of Rule $2.p$,
taking $S$ as $U$}{\small \par}

\texttt{\small ~~~~~~endif}{\small \par}

\texttt{\small ~~~~done}{\small \par}

\texttt{\small ~~endif}{\small \par}

\texttt{\small done}{\small \par}

\caption{\label{alg:faster_d_degenerate}Faster implementation of Rule~$2.p$
in $d$-degenerate graphs.}

\end{algorithm}

As $|N|\le d$, the inner \emph{for} loop is executed at most ${d \choose p-1}$
times for each iteration of the outer loop. Each of the individual
steps in the algorithm can be done in $O(dn)$ time, and so Rule~$2.p$
can be applied in $O(dn\sum_{l=1}^{n}{d \choose p-1})$ time. All
the rules $2.p$ can therefore be applied in~$O(dn\sum_{l=1}^{n}\sum_{p=1}^{d-1}{d \choose p-1})=O(2^{d}\cdot dn^{2})$
time. Thus we have: 

\begin{theorem}
\label{thm:kij_degenerate_poly_kernel}For any fixed $d\ge1$, the
\noun{$k$-Dominating Set} problem on $d$-degenerate graphs has a
kernel on $O((d+2)^{2(d+2)}\cdot k^{(2(d+1))^{2}})$ vertices and
edges, and this kernel can be found in $O(2^{d}\cdot dn^{2})$ time
for an input graph on $n$ vertices. 
\end{theorem}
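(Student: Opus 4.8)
The plan is to establish Theorem~\ref{thm:kij_degenerate_poly_kernel} by combining the correctness and kernel-size bound already obtained for the general algorithm with the faster implementation of Rule~$2.p$ described above. First I would observe that the modified algorithm differs from the algorithm of Section~\ref{sec:kij-free kernel} (specialized to $i=j=d+1$) \emph{only} in the order in which pairs $(U,B)$ satisfying the precondition of Rule~$2.p$ are discovered and processed; the actual effect of each rule application is identical. Hence Claim~\ref{cla:alg_preserve_k_ij} and Lemma~\ref{lem:algorithm_correctness_ij}, together with the reduction from \noun{rwb-Dominating Set} to \noun{$k$-Dominating Set} preceding Theorem~\ref{thm:kij_poly_kernel}, immediately give that the output is a correct kernel of size $O((d+2)^{2(d+2)}\cdot k^{2(d+1)^2})$ vertices and edges when we substitute $i=j=d+1$ into Corollary~\ref{cor:d_degenerate_poly_kernel}. (The exponent in the theorem statement, $(2(d+1))^2=4(d+1)^2$, is a looser but still polynomial bound, so it is certainly implied.)

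Next I would verify that Algorithm~\ref{alg:faster_d_degenerate} does indeed find every applicable instance of Rule~$2.p$. The key structural point to argue carefully is the claim made in the text: after deleting red vertices and fixing a degeneracy ordering $v_1,\dots,v_n$ as in Fact~\ref{fac:degenerate_ordering}, if $U=\{u_1,\dots,u_{d-p+1}\}$ (no red vertices, by the rule) has a common black-neighbor set $B$ with $|B|>b$, then the \emph{earliest} vertex in the ordering among $U\cup B$ must lie in $B$. This follows because each $u\in U$ has more than $b\ge k\ge 1$ black neighbors hence degree $>d$ in $G$; but any vertex appearing earlier than all of $U\cup B$ in the ordering has at most $d$ neighbors \emph{among later vertices}, and a vertex of $U$ appearing first would have all of $B$ (size $>d$) to its right — contradiction. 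So the first such vertex $v_l$ is in $B$, it has $U$ (of size $d-p+1$) entirely to its right, and $U$ is a $(d-p+1)$-subset of $N(v_l)\cap\{v_{l+1},\dots,v_n\}$. Therefore enumerating, for each black $v_l$ of high enough forward-degree, all $(d-p+1)$-subsets $S$ of its forward-neighborhood $N$ and testing the common-black-neighbor condition on $S$ catches every pair $(U,B)$; applying the three steps of Rule~$2.p$ to each such $S$ (re-running the loop until no changes, as the pseudocode's outer structure and the "apply exhaustively" wording dictate) exhaustively reduces $G$ with respect to Rule~$2.p$, exactly as the unmodified algorithm does.

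Finally I would do the running-time accounting. Computing the degeneracy ordering costs $O(dn)$ by Fact~\ref{fac:degenerate_ordering}. For a fixed $p$, the outer loop runs $n$ times; for each $l$ with $\deg_{\text{forward}}(v_l)\ge d-p+1$ we have $|N|\le d$, so the inner loop over $(d-p+1)$-subsets runs at most $\binom{d}{d-p+1}=\binom{d}{p-1}$ times, and each iteration — finding $N$, forming $S$, and counting common black neighbors of $S$ in $G$ — takes $O(dn)$ time. Hence one full pass of Rule~$2.p$ is $O\!\left(dn\sum_{l=1}^{n}\binom{d}{p-1}\right)=O\!\left(dn^2\binom{d}{p-1}\right)$; since each successful application of the rule colors at least $b+1$ black vertices white and no rule ever recreates black vertices except the $O(d)$ gadget vertices per application (and the number of applications is polynomially bounded, as in the size analysis), the total over all passes and all $p$ from $1$ to $d-1$ is $O\!\left(dn^2\sum_{p=1}^{d-1}\binom{d}{p-1}\right)=O(2^d\,dn^2)$. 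The remaining rules (Rules~1, 3, 4, 5, 6) each run in time polynomial in $n$ with no dependence worse than this, so the whole kernelization runs in $O(2^d\,dn^2)$ time, completing the proof. The main obstacle I anticipate is being fully rigorous about the "earliest vertex is black" claim and about why iterating the local search to fixpoint genuinely reproduces the exhaustive application of Rule~$2.p$ — i.e., that newly added gadget vertices (which are black, degree $i-p=d+1-p$) cannot themselves trigger a spurious application, since their common-neighborhood with anything has size at most $1<b$, and that the interaction with Invariant~\ref{rem:rules_no_regress_ij} (earlier rules stay satisfied) is preserved by the reordered discovery. \qed
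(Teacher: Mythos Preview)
Your proposal is correct and follows essentially the same route as the paper: invoke the general kernel-size bound with $i=j=d+1$, argue via the degeneracy ordering that the earliest vertex of $U\cup B$ lies in $B$ so that every applicable $U$ appears as a $(d-p+1)$-subset of some forward neighborhood, and then sum $\binom{d}{p-1}$ over $p$ to get the $O(2^d\,dn^2)$ bound. One small tightening: to conclude ``degree $>d$'' for $u\in U$ you need $b>d$, which follows from $b\ge (d+1)k\ge d+1$ (since here $j=d+1$), not merely from $b\ge k\ge 1$ as you wrote.
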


\section{A polynomial kernel for Independent Dominating Set on $K_{i,j}$-free
graphs}

\label{sec:indep_dom_set_ij}

The $k$\noun{-Independent Dominating Set} problem asks, for a graph
$G$ and a positive integer $k$ given as inputs, whether $G$ has
a dominating set $S$ of size at most $k$ such that $S$ is an independent
set (i.e.\ no two vertices in $S$ are adjacent). This problem is
known to be NP-hard for general graphs~\cite{GareyJohnson1979},
and the problem parameterized by $k$ is $W[2]$-complete~\cite{DowneyFellows1999}.
Using a modified version of the set of reduction rules in Section~\ref{sec:kij-free kernel}
we show that the $k$-\noun{Independent Dominating Set} has a polynomial
kernel in $K_{ij}$-free graphs for $j\ge i\ge1$. For $i=1,j\ge1$
we can easily obtain trivial kernels as before, and for $i=2,j\ge2$
a simplified version of the following algorithm gives a kernel of
size $O(j^{3}k^{4})$.

\subsection{The reduction rules}

Rule~$1$ is the same as for the \noun{Dominating Set} kernel for
$K_{ij}$-free graphs (Section~\ref{sub:rules_ij}). Rules~$2.1$
to $2.(i-2)$ and Rule~$3$ are modified to make use of the fact
that we are looking for a dominating set that is independent. A vertex
$u$ that is made white will never be part of the independent dominating
set $D$ that is sought to be constructed by the algorithm, since
$u$ is adjacent to some vertex $v\in D$. So a vertex can be deleted
as soon as it is made white. Also, rules~$1,2.1\ldots2.(i-2)$ and~$3$
are the only rules. Rules~$4$ and~$5$ from that section do not
apply, because of the same reason as above. The modified rules ensure
that no vertex is colored white, and so they work on \emph{rb-graphs}:
graphs whose vertex set is partitioned into red and black vertices.
Using these modified rules, the bounds of $|R_{H}|$ and $|B_{H}|$
in the proof of Lemma~\ref{lem:algorithm_correctness_ij}, and the
fact that there are no white vertices, we have

\begin{theorem}
\label{thm:kij_ind_dom_poly_kernel}For any fixed $j\ge i\ge1$, the
\noun{$k$-Independent Dominating Set} problem on $K_{i,j}$-free
graphs has a polynomial kernel with $O(jk^{i})$ vertices. 
\end{theorem}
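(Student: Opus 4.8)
The plan is to follow the same pattern used to prove Theorem~\ref{thm:kij_poly_kernel}, but tracking an \emph{rb}-graph (red and black vertices only) throughout, so that no white colour class ever appears. First I would observe that the reduction to the coloured problem is even simpler here: given an uncoloured instance $(G,k)$ of \noun{$k$-Independent Dominating Set}, colour every vertex black to obtain an \emph{rb}-graph; an independent \emph{rb}-dominating set (i.e.\ an independent set $S$ with $R_G\subseteq S$, $|S|\le k$, dominating $B_G$) of the coloured instance corresponds exactly to an independent dominating set of $G$ of size at most $k$. The point noted in the text is that any vertex a rule would have coloured white is a neighbour of a vertex forced into $D$, hence can never itself lie in the independent dominating set, so it may simply be deleted outright. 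Thus Rules~$4$ and~$5$ become vacuous, and in Rules~$2.p$ and~$3$ every ``colour white'' instruction is replaced by ``delete''.

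Next I would re-examine the correctness claims. Claim~\ref{cla:rule2p_forces_one_ij} and Claim~\ref{cla:forced_red_single_ij} go through verbatim, since they only assert that some vertex of a large-common-neighbourhood set must lie in \emph{any} size-$\le k$ set dominating the relevant black set --- in particular in an independent one; the $K_{i,j}$-freeness argument is untouched. The correctness of the modified Rule~$2.p$ needs a small adjustment to the proof of Lemma~\ref{lem:rule2_correct_ij}: in the forward direction, if $D$ is independent and meets $U$, we keep $D':=D$, which is still independent (the gadget vertices $X$ are dominated by $D\cap U$, and $X$ is untouched); in the reverse direction we set $D:=(D'\setminus X)\cup U$ when $D'\cap U=\emptyset$ --- but now I must check this stays independent. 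Since $B$ (the common black neighbourhood of $U$) was deleted rather than recoloured, no vertex of $U$ has a neighbour among the remaining original vertices that could already be in $D'$ except possibly another vertex of $U$; here the invariant ``red vertices have all white/absent neighbours'' together with the fact that the $u_\ell$ are non-red and mutually non-adjacent (they have $>b\ge 1$ common black neighbours forming an independent-compatible structure --- actually I need $U$ itself to be independent, which follows because otherwise two adjacent vertices of $U$ plus their common black neighbours would already block any independent dominating set, and one can argue this pair-case separately or fold it into Rule~$2$). This independence-preservation bookkeeping is the one genuinely new ingredient and is where I expect the main (though still routine) effort to go.

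Granting the reduction rules are correct, the kernel-size bound is immediate from the analysis already done. Rule~$6$'s argument gives $|R_H|\le k$ and $|B_H|\le jk^{i}+k^{i-1}+\cdots+k^{2}\le (j+1)k^{i}$ exactly as in the proof of Lemma~\ref{lem:algorithm_correctness_ij}. The difference is that the reduced graph $H$ now has \emph{no} white vertices at all, so $|V(H)|=|R_H|+|B_H|=O(jk^{i})$ --- this is precisely where the exponent drops from $2i^{2}$ to $i$ and the base from $(j+1)^{2(i+1)}$ to $j$. Finally I would note that passing back to an uncoloured instance is trivial here: the red vertices of $H$ have no black neighbours (Invariant~\ref{rem:red_nbrs_white_ij}, with ``white'' now meaning ``deleted''), so each is either isolated or can be removed after attaching a sufficiently large pendant structure to force it into the solution, and since there are no white vertices there is nothing further to encode; the count $O(jk^{i})$ is preserved up to the usual additive lower-order terms. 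Assembling these pieces yields the claimed $O(jk^{i})$-vertex polynomial kernel, and the running time is $O(n^{i+O(1)})$ as before since the bottleneck is still enumerating $p$-subsets in Rule~$2$. The special cases $i=1$ (trivial linear kernel) and $i=2$ (the $O(j^{3}k^{4})$ simplified version) are handled as indicated in the text.
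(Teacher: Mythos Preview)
Your plan coincides with the paper's: pass to an \emph{rb}-coloured instance, replace every ``colour white'' by ``delete'' in Rules~$2.p$ and~$3$, discard Rules~$4$ and~$5$, and read off $|R_H|+|B_H|=O(jk^{i})$ from the bounds already established in Lemma~\ref{lem:algorithm_correctness_ij}. The paper's own proof is in fact even terser than your proposal: it simply asserts that the modified rules work and cites those bounds, without re-arguing rule correctness at all.

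You are right that the one new point is independence preservation in the reverse direction of (the modified) Rule~$2.p$, but the argument you sketch for it is not correct. The claim that after deleting $B$ ``no vertex of $U$ has a neighbour among the remaining original vertices that could already be in $D'$ except possibly another vertex of $U$'' is false: $B$ is only the \emph{common} black neighbourhood of $U$, so each $u_\ell$ may well have further black neighbours outside $B$ that survive in $G'$ and could lie in $D'\setminus X$. Your fallback claim that $U$ must itself be independent (``otherwise two adjacent vertices of $U$ plus their common black neighbours would already block any independent dominating set'') is also unjustified --- Claim~\ref{cla:rule2p_forces_one_ij} only forces \emph{one} vertex of $U$ into $D$, so an edge inside $U$ causes no contradiction. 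Thus the substitution $D:=(D'\setminus X)\cup U$ need not be independent, and the case $D'\cap U=\emptyset$ genuinely needs a different treatment (or a different gadget). The paper does not address this point either; what you have matches the paper's level of rigour on everything else, but the specific patch you propose for this step does not go through as written.
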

For $d$-degenerate graphs, we have $i=j=d+1$, and therefore we have: 

\begin{corollary}
\label{cor:degenerate_ind_dom_poly_kernel}For any fixed $d\ge1$,
the \noun{$k$-Independent Dominating Set} problem on $d$-degenerate
graphs has a polynomial kernel with $O((d+1)k^{(d+1)})$ vertices. 
\end{corollary}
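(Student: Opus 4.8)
The plan is to obtain this as a direct instantiation of Theorem~\ref{thm:kij_ind_dom_poly_kernel}, exactly parallel to how Corollary~\ref{cor:d_degenerate_poly_kernel} follows from Theorem~\ref{thm:kij_poly_kernel}. The first step is to recall the structural observation already used in Section~\ref{sec:kernel_d_degenerate}: a $d$-degenerate graph cannot contain $K_{d+1,d+1}$ as a subgraph, since that complete bipartite graph has minimum degree $d+1$, whereas every subgraph of a $d$-degenerate graph has a vertex of degree at most $d$. Hence every $d$-degenerate graph is $K_{i,j}$-free with $i=j=d+1$.

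The second step is simply to invoke Theorem~\ref{thm:kij_ind_dom_poly_kernel} with these parameters. That theorem gives a polynomial kernel for \noun{$k$-Independent Dominating Set} on $K_{i,j}$-free graphs with $O(jk^{i})$ vertices; substituting $i=j=d+1$ yields a kernel with $O((d+1)k^{d+1})$ vertices, which is precisely the claimed bound. One should note in passing that the kernel produced is itself a subgraph-closed (indeed, vertex-deletion-derived) instance, so it remains $d$-degenerate and the hypothesis of the theorem is not violated at any intermediate stage.

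I do not expect a genuine obstacle here; the only point requiring a line of care is that the reduction rules of Section~\ref{sec:indep_dom_set_ij} must not create a $K_{d+1,d+1}$, but this is guaranteed by the first invariant (none of the reduction rules introduces a $K_{i,j}$ into the graph), which is inherited by the modified rule set used for the independent variant. Consequently the chain ``$d$-degenerate $\Rightarrow$ $K_{d+1,d+1}$-free $\Rightarrow$ apply Theorem~\ref{thm:kij_ind_dom_poly_kernel}'' closes the argument immediately.
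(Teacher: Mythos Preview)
Your approach is correct and matches the paper exactly: the paper derives this corollary in one line by setting $i=j=d+1$ in Theorem~\ref{thm:kij_ind_dom_poly_kernel}, using the observation that $d$-degenerate graphs are $K_{d+1,d+1}$-free. One small inaccuracy in your aside: the kernel is not purely vertex-deletion-derived, since (the modified) Rule~2 still adds gadget vertices~$X$; the property that is actually preserved throughout is $K_{i,j}$-freeness (the first invariant), not $d$-degeneracy, and that is all Theorem~\ref{thm:kij_ind_dom_poly_kernel} needs.
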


\section{\label{sec:Conclusion}Conclusions and Future Work}

In this paper, we presented a polynomial kernel for the $k$\noun{-Dominating
Set} problem on graphs that do not have $K_{i,j}$ as a subgraph,
for any fixed $j\ge i\ge1$. We used this to show that the $k$\noun{-Dominating
Set} problem has a polynomial kernel of size $O((d+2)^{2(d+2)}\cdot k^{2(d+1)^{2}})$
on graphs of bounded degeneracy, thereby settling an open problem
from~\cite{AlonGutner2008}. Our algorithm also yielded a slightly
simpler and a smaller kernel for the $k$\noun{-Independent Dominating
Set} problem on $K_{i,j}$-free and $d$-degenerate graphs. These
algorithms are based on simple reduction rules that look at the common
neighborhoods of sets of vertices. 

Dom et al.~\cite{DomLokshtanovSaurabh2008} have shown, by extending
the kernel lower-bound techniques of Bodlaender et al.~\cite{BodlaenderDowneyFellowsHermelin2008},
that the $k$\noun{-Dominating Set} problem on $d$-degenerate graphs
does not have a kernel of size polynomial in \emph{both} $d$ and
$k$ unless the Polynomial Hierarchy collapses to the third level.
This shows that the kernel size that we have obtained for this class
of graphs cannot possibly be significantly improved.

Many interesting classes of graphs are of bounded degeneracy. These
include all nontrivial minor-closed families of graphs such as planar
graphs, graphs of bounded genus, graphs of bounded treewidth, and
graphs excluding a fixed minor, and some non-minor-closed families
such as graphs of bounded degree. Graphs of degeneracy $d$ are $K_{d+1,d+1}$-free.
Since any $K_{i,j};j\ge i\ge2$ contains a~$4$-cycle, every graph
of girth~$5$ is $K_{i,j}$-free. From~\cite[Theorem 1]{Sachs1963},
there exist graphs of girth~$5$ and arbitrarily large degeneracy.
Hence $K_{i,j}$-free graphs are strictly more general than graphs
of bounded degeneracy. To the best of our knowledge, $K_{i,j}$-free
graphs form the largest class of graphs for which FPT algorithms and
polynomial kernels are known for the dominating set problem variants
discussed in this paper.

One interesting direction of future work is to try to demonstrate
kernels of size $f\left(d\right)\cdot k^{c}$ for the $k$\noun{-Dominating
Set} problem on $d$-degenerate graphs, where $c$ is independent
of $d$. Note that the result of Dom et al.\ mentioned above does
\emph{not} suggest that such kernels are unlikely. Another challenge
is to improve the running times of the kernelization algorithms: to
remove the exponential dependence on $d$ of the running time for
$d$-degenerate graphs, and to get a running time of the form $O(n^{c})$
for $K_{i,j}$-free graphs where $c$ is independent of $i$ and $j$.

\paragraph{Acknowledgments.}

We thank Aravind Natarajan for pointing out the connection between
$K_{i,j}$-free and $d$-degenerate graphs, Saket Saurabh and the
other authors of~\cite{DomLokshtanovSaurabh2008} for sharing with
us the lower-bound result mentioned in their paper, and Shai Gutner
for his comments on an earlier draft of this paper.

\bibliographystyle{acm}
\bibliography{ds_kernel}

\begin{thebibliography}{10}

\bibitem{AlberFellowsNiedermeier2004}
{\sc Alber, J., Fellows, M.~R., and Niedermeier, R.}
\newblock Polynomial-time data reduction for {D}ominating {S}et.
\newblock {\em Journal of the ACM 51}, 3 (2004), 363--384.

\bibitem{AlonGutner2007}
{\sc Alon, N., and Gutner, S.}
\newblock Linear time algorithms for finding a dominating set of fixed size in
  degenerated graphs.
\newblock In {\em Computing and Combinatorics, 13th Annual International
  Conference, {COCOON} 2007, Banff, Canada, July 16-19, 2007, Proceedings\/}
  (2007), vol.~4598 of {\em Lecture Notes in Computer Science}, Springer,
  pp.~394--405.

\bibitem{AlonGutner2008}
{\sc Alon, N., and Gutner, S.}
\newblock Kernels for the dominating set problem on graphs with an excluded
  minor.
\newblock Technical Report TR08-066, The Electronic Colloquium on Computational
  Complexity (ECCC), 2008.

\bibitem{BodlaenderDowneyFellowsHermelin2008}
{\sc Bodlaender, H.~L., Downey, R.~G., Fellows, M.~R., and Hermelin, D.}
\newblock On problems without polynomial kernels (extended abstract).
\newblock In {\em Automata, Languages and Programming, 35th International
  Colloquium, {ICALP} 2008, Reykjavik, Iceland, July 7-11, 2008, Proceedings,
  Part {I}: Tack {A}: Algorithms, Automata, Complexity, and Games\/} (2008),
  Lecture Notes in Computer Science, Springer, pp.~563--574.

\bibitem{ChenFernauKanjXia2007}
{\sc Chen, J., Fernau, H., Kanj, I.~A., and Xia, G.}
\newblock Parametric duality and kernelization: Lower bounds and upper bounds
  on kernel size.
\newblock {\em SIAM Journal of Computing 37}, 4 (2007), 1077--1106.

\bibitem{DemaineFominHajiaghayiThilikos2005:2}
{\sc Demaine, E.~D., Fomin, F.~V., Hajiaghayi, M., and Thilikos, D.~M.}
\newblock Subexponential parameterized algorithms on bounded-genus graphs and
  {$H$}-minor-free graphs.
\newblock {\em Journal of the ACM 52}, 6 (2005), 866--893.

\bibitem{Diestel2000}
{\sc Diestel, R.}
\newblock {\em Graph theory}, 2~ed.
\newblock Springer-Verlag, New York, 2000.

\bibitem{DomLokshtanovSaurabh2008}
{\sc Dom, M., Lokshtanov, D., and Saurabh, S.}
\newblock Incompressibility through {C}olors and {ID}s.
\newblock Accepted at ICALP 2009.

\bibitem{DowneyFellows1999}
{\sc Downey, R.~G., and Fellows, M.~R.}
\newblock {\em Parameterized Complexity}.
\newblock Springer, 1999.

\bibitem{EllisFanFellows2004}
{\sc Ellis, J.~A., Fan, H., and Fellows, M.~R.}
\newblock The {D}ominating {S}et problem is fixed parameter tractable for
  graphs of bounded genus.
\newblock {\em Journal of Algorithms 52}, 2 (2004), 152--168.

\bibitem{FominThilikos2004}
{\sc Fomin, F.~V., and Thilikos, D.~M.}
\newblock Fast parameterized algorithms for graphs on surfaces: Linear kernel
  and exponential speed-up.
\newblock In {\em Automata, Languages and Programming: 31st International
  Colloquium, {ICALP} 2004, Turku, Finland, July 12-16, 2004. Proceedings\/}
  (2004), vol.~3142 of {\em Lecture Notes in Computer Science}, Springer,
  pp.~581--592.

\bibitem{FominThilikos2006}
{\sc Fomin, F.~V., and Thilikos, D.~M.}
\newblock Dominating sets in planar graphs: Branch-width and exponential
  speed-up.
\newblock {\em SIAM Journal of Computing 36}, 2 (2006), 281--309.

\bibitem{FranceschiniLuccioPagli2006}
{\sc Franceschini, G., Luccio, F., and Pagli, L.}
\newblock Dense trees: a new look at degenerate graphs.
\newblock {\em Journal of Discrete Algorithms 4\/} (2006), 455--474.

\bibitem{GareyJohnson1979}
{\sc Garey, M.~R., and Johnson, D.~S.}
\newblock {\em Computers and Intractability: A Guide to the Theory of
  NP--Completeness}.
\newblock Freeman, San Francisco, 1979.

\bibitem{GolovachVillanger2008}
{\sc Golovach, P.~A., and Villanger, Y.}
\newblock Parameterized complexity for domination problems on degenerate
  graphs.
\newblock In {\em Proceedings of the 34th International Workshop on
  Graph-Theoretic Concepts in Computer Science, {WG 2008}\/} (2008), vol.~5344
  of {\em Lecture Notes in Computer Science}, Springer.

\bibitem{GuoNiedermeier2007}
{\sc Guo, J., and Niedermeier, R.}
\newblock Invitation to data reduction and problem kernelization.
\newblock {\em SIGACT News 38}, 1 (2007), 31--45.

\bibitem{RamanSaurabh2008}
{\sc Raman, V., and Saurabh, S.}
\newblock Short cycles make {W}-hard problems hard: {FPT}~algorithms for
  {W}-hard problems in graphs with~no~short cycles.
\newblock {\em Algorithmica 52}, 2 (2008), 203--225.

\bibitem{Sachs1963}
{\sc Sachs, H.}
\newblock Regular graphs with given girth and restricted circuits.
\newblock {\em Journal of the London Mathematical Society s1-38}, 1 (1963),
  423--429.

\end{thebibliography}
 
\end{document}